\theoremstyle{plain}
\newtheorem{Theorem}{Theorem}[section]
\newtheorem{Lemma}{Lemma}[section]
\theoremstyle{definition}
\newtheorem{Problem}{Problem}
\numberwithin{equation}{section}
\begin{document}

\title{Solution to an open problem on the closeness of graphs}

\author{Fazal Hayat,  Shou-Jun Xu\footnote{Corresponding author.
 \newline E-mails: fhayatmaths@gmail.com (F. Hayat), shjxu@lzu.edu.cn (S.-J. Xu)} \\
School of Mathematics and Statistics,  Gansu Center for Applied Mathematics,\\
 Lanzhou University,  Lanzhou 730000,  P.R. China }

 \date{}
\maketitle

\begin{abstract}
 A network can be analyzed by means of many graph theoretical parameters. In the context of networks analysis,  closeness is a structural metric that evaluates a node's significance inside a network.   A cactus is a connected graph in which any block is either a cut edge or a cycle. This paper analyzes the closeness of cacti, we determine the unique graph that minimizes the closeness over all cacti with fixed numbers of vertices and cycles, which solves an open problem proposed by Poklukar \& \v{Z}erovnik  [Fundam. Inform. 167 (2019) 219--234]. \\\\
 {\bf Key words}: Closeness, residual closeness,  cactus, extremal graph.\\\\
{\bf 2010 Mathematics Subject Classification:} 68R10; 05C12; 05C35
\end{abstract}

\section{Introduction}

A network is usually represented by a simple graph where the vertices represent the nodes of the network and the edges represent the links between a pair of nodes that enable mutual communication. The key point in network analysis is the study and determination of their strength and fragility.  Graph theory has become one of the most powerful mathematical tools in network analysis and has many solution techniques and approaches in this regard. One of the most important tasks of network analysis is to determine which nodes or links are more critical in a network.   There are many graph-theoretic parameters related to the analysis of networks. Among these parameters, closeness is a way of detecting nodes that can spread information very efficiently through a  network. The closeness parameter is used in various scientific fields, for example,  in an information network,  closeness is a useful measure that estimates how fast the flow of information would be from a given node to other nodes. In social network analysis, it is used to find the individuals who are best placed to influence the entire network most quickly.

The initial closeness concept was first introduced by Freeman \cite{Free} in 1979. However, this parameter is unsuitable for disconnected graphs and also has some weaknesses in formulation and computation. To overcome the first deficiency, Latora and Marchiori \cite{LM} introduced a new measure of closeness  for disconnected graphs, which are still subject to the second weakness. Finally in 2006,  Danglachev \cite{Dan} proposed a rather different definition of the closeness, which can be used effectively for disconnected graphs and allows to creation of convenient formulae for graph computation. Based on the definition of closeness, many vulnerability measures have been defined to describe the resistance of a network. Among  these new measures, vertex (resp. edge) residual closeness parameters evaluate the closeness of a graph after vertex (resp. edge) removal \cite{Dan}. Another measure is additional closeness, which determines the maximal potential of the graph’s closeness via adding an edge \cite{Dan4, Dan5}. We refer the interested readers to \cite{AB, AO, CZ, Dan2,  LXZ, WZ, ZLG} for more detailed information about these new sensitive parameters.

Since its introduction, the closeness parameter has already incited a lot of research \cite{AT, Dan1, OA}. For example, Danglachev  obtained the closeness values for various graphs like complete graphs, stars, paths and  cycles in \cite{Dan},  splitting graphs in \cite{Dan3},   line graphs in \cite{Dan6}. Golpek \cite{Go} obtained  the closeness formulas for some classes of graphs like k-ary trees, binomial trees, binary trees, comet graphs, double comet graphs and double star graphs.  Poklukar and \v{Z}erovnik \cite{PZ} determined  the graphs that minimize or maximize the  closeness  among all graphs and several subclasses of graphs, including trees and cacti.  Recently, Zheng and Zhou \cite{ZZ} merged the concept of closeness in spectral graph theory, they determined the closeness matrix and established the relationship between the closeness eigenvalues and the structure of a graph.

Let $G $ be a simple connected  graph with vertex set $V(G)$ and edge set $E(G)$. For  $v \in V(G)$,  $N_G(v)$ denotes the set of vertices that are adjacent to $v$ in $G$. The degree of $v \in V(G)$, denoted by $d_G(v)$,  is the cardinality of  $N_G(v)$.
For $e \in E(G)$, let $G - e$ be the subgraph of $G$ obtained by deleting $e$ and  $G + xy$ be a graph obtained from $G$ by adding an edge connecting $x, y$, where  $x,y \in V(G)$. The graph formed from $G$ by deleting a vertex $v \in V(G)$ (and its incident edges) is denoted by $G - v$.
For  $u, v \in V(G)$, the  distance between $u$ and $v$ in $G$ is the least length of the path connecting $u$ and $v$ and is denoted by $d_G(u,v)$.  By $P_n$ and $C_n$ we denote the path and cycle  on $n$ vertices, respectively.

For a vertex $u$ of $G$, the closeness  of $u$ in $G$ is defined as \cite{Dan}
\[
 C_G(u)=  \sum_{v \in V(G)\setminus \{u\}} 2^{-d_G(u,v)}.
\]
The closeness of $G$ is defined as
\[
 C(G)= \sum_{u \in V(G) }C_G(u) = \sum_{u \in V(G) }\sum_{v \in V(G)\setminus \{u\}} 2^{-d_G(u,v)}.
\]

A cactus is a connected graph in which any block is either a cut edge or a cycle, or equivalently, a graph in which any two cycles have at most one common vertex. A cycle in a cactus is called an end-block if exactly one vertex of this cycle has a degree greater than  2. Let $\mathcal{D}(n, k)$ be the set of all cacti of order $n$ with $k$ cycles, where $0 \leq k \leq \lfloor \frac{n-1}{2} \rfloor$. The cactus graph has many applications in real life problems \cite{Das, Zm}.
Poklukar and \v{Z}erovnik \cite{PZ} determined  the unique graph that maximizes the closeness in $\mathcal{D}(n, k)$, and posed  an open problem for the minimum case.

\begin{Problem}\cite{PZ}\label{1}
 Find the unique graph that minimizes the closeness in  $\mathcal{D}(n, k)$, $0 \leq k \leq \lfloor \frac{n-1}{2} \rfloor$.
\end{Problem}

Let $k_1, k_2$ be two non-negative integers such that $k_1 + k_2 =k$, and let $D(n; k_1, k_2)$ be obtained from the path $P_{n-k}=u_1u_2 \cdots u_{n-k}$ and $k$ isolated vertices $\{v_1, v_2, \dots, v_k\}$ by adding $2k$ edges such that $N_G(v_i)= \{u_i, u_{i+1}\}$ for  $1 \leq i \leq k_1$ and  $N_G(v_i)= \{u_{n-2k-1+i}, u_{n-2k+i}\}$ for  $k_1+1 \leq i \leq k$. That is, $D(n; k_1, k_2)$ be the cactus obtained from $P_{n-k}=u_1u_2 \cdots u_{n-k}$ by replacing its first $k_1$ and last $k_2$ consecutive edges with $k_1$ and $k_2$ triangles, respectively.

In this paper, we identify the graph that minimizes the closeness among $\mathcal{D}(n, k)$, which contribute to the above mentioned Problem \ref{1} by proving the following Theorem.

\begin{Theorem}\label{a}
  Let $G$ be a graph in $\mathcal{D}(n, k)$,  where $0 \leq k \leq \lfloor \frac{n-1}{2} \rfloor$. Then $ C(G) \leq C (D(n; \lfloor \frac{k}{2} \rfloor, \lceil\frac{k}{2}\rceil))$
with equality  if and only if $G \cong D(n; \lfloor \frac{k}{2} \rfloor, \lceil\frac{k}{2}\rceil)$.
  \end{Theorem}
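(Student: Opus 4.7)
The plan is to prove Theorem~\ref{a} by exhibiting a family of local structural transformations on cacti in $\mathcal{D}(n,k)$, each strictly \emph{increasing} the closeness $C$, together driving any $G \in \mathcal{D}(n,k) \setminus \{D(n;\lfloor k/2\rfloor,\lceil k/2\rceil)\}$ toward the extremal graph. The first stage is a cycle-shortening step: if some block of $G$ is a cycle $C_\ell$ with $\ell \ge 4$, I replace it by a triangle and reinsert the $\ell - 3$ excess vertices as a pendant path attached at a carefully chosen cut vertex; the output remains in $\mathcal{D}(n,k)$, and the strict gain in $C$ should follow from a pairwise comparison of the two distance matrices. Iterating reduces to the case where every cycle block of $G$ is a triangle.

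In the second stage I would exploit the block-cut tree of the resulting triangle cactus. If some cut vertex has two or more branches off the longest path in the block-cut tree, I apply a grafting transformation (detach the shorter branch and reattach its blocks at the end of the longer branch), analogous to those standard in extremal tree arguments, and verify that $C$ strictly increases. Iterating straightens the block-cut tree to a path, after which a subsequent edge-swap transformation pushes every triangle toward the two ends of that path, leaving a cactus of the form $D(n;k_1,k_2)$ with $k_1 + k_2 = k$. Within this one-parameter family I would compute $C(D(n;k_1,k_2))$ in closed form using the explicit backbone-path distances and the locations of the triangle vertices, then study the single-triangle transfer $\Delta(k_1,k_2) := C(D(n;k_1+1,k_2-1)) - C(D(n;k_1,k_2))$; showing $\Delta(k_1,k_2) > 0$ whenever $k_1 < k_2 - 1$ pins the unique maximizer to $\{k_1,k_2\} = \{\lfloor k/2\rfloor,\lceil k/2\rceil\}$.

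The principal obstacle is verifying strict monotonicity of $C$ in the first two stages. Since $C(G) = \sum_{u \ne v} 2^{-d_G(u,v)}$ aggregates weighted contributions from every vertex pair, each transformation above will necessarily shorten some distances while lengthening others, and the sign of the net change is far from transparent. My approach is to split the vertex pairs into those \emph{local} to the modified block (few pairs, small changes, but comparatively large weights $2^{-d}$) and those \emph{far away} (many pairs, larger distances, hence exponentially smaller contributions $2^{-d}$), then show that the local gain dominates the long-range loss by exploiting the geometric decay of $2^{-d}$. The final balancing step is finite-dimensional and should reduce to a direct inequality between geometric sums in $k_1$ and $k_2$.
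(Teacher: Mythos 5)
Your overall scheme---a sequence of local transformations on cacti in $\mathcal{D}(n,k)$, monotone in $C$, that first shrinks every cycle to a triangle, then straightens the block-cut structure into a path, then pushes the triangles to the two ends and balances them---is exactly the architecture of the paper's proof. But there is a genuine problem with the \emph{direction} of monotonicity you assert. The graph $D(n;\lfloor k/2\rfloor,\lceil k/2\rceil)$ is the \emph{minimizer} of $C$ in $\mathcal{D}(n,k)$: this is what the abstract, Problem~\ref{1} (the open problem of Poklukar and \v{Z}erovnik, who already found the maximizer), and the paper's own argument all establish, and the ``$\leq$'' in the statement of Theorem~\ref{a} is a sign slip for ``$\geq$''. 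Every transformation you describe---replacing a long cycle by a triangle with the $\ell-3$ excess vertices reinserted as a pendant path, regrafting a shorter branch onto the end of a longer one, sliding triangles toward the ends of the backbone---stretches the graph and (weakly) increases all pairwise distances it affects, hence strictly \emph{decreases} $C=\sum_{u\neq v}2^{-d(u,v)}$. If you attempt your ``local gain dominates long-range loss'' analysis to certify an increase, it will fail; indeed, within the family $D(n;k_1,k_2)$ your transfer quantity $\Delta(k_1,k_2)$ has the opposite sign to what you claim, and a maximization reading would select $D(n;k,0)$ rather than the balanced split. So the plan is salvageable only after flipping every inequality and recognizing that you are proving minimality.

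Two further remarks on feasibility. First, the sign analysis you correctly identify as the principal obstacle is handled in the paper not by a raw local-versus-far decomposition but by the cut-vertex product formula of Lemma~\ref{111}, $C(G)=C(H_1)+C(H_2)+2C_{H_1}(v_1)C_{H_2}(v_2)$, which reduces each transformation to comparing the vertex closeness $C_Q(\cdot)$ at two candidate attachment points of the unchanged part $Q$ (Lemmas~\ref{L3}, \ref{L4}, \ref{L5}, \ref{L6}); without some such device the net-change computation for your grafting and triangle-sliding steps is quite delicate, and the $C_4$ case in particular only gives a non-strict inequality (Lemma~\ref{L2}) and needs a follow-up move. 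Second, your closing step---a closed-form evaluation of $C(D(n;k_1,k_2))$ and a one-parameter transfer argument---is a genuinely different (and in principle workable) route to the balancing claim $|k_1-k_2|\leq 1$; the paper instead avoids any explicit formula and balances via one more transformation lemma (Lemma~\ref{L6}). That computation would also need its inequality reversed.
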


In Section 2, we give some transformations that minimize $C(G)$. As an application of these transformations,  in Section 3, we give the proof of Theorem \ref{a}.

\section{Preliminaries}

In this section, we propose some transformations that minimize $C(G)$.

\begin{Lemma} \cite{Dan4} \label{111}
For $i=1,2 $, let $H_i$ be a  graph with $v_i \in V(H_i)$. Let $G$ be the graph  obtained from $H_1$ and $ H_2$  by identifying $v_1$ with $v_2$. Then
$$
C({G})= C(H_1)+C(H_2)+2C_{H_1}(v_1)C_{H_2}(v_2).
$$
 \end{Lemma}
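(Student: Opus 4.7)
The plan is to expand $C(G)$ using its definition as an ordered double sum over distinct vertex pairs and then partition those pairs by which of the two pieces $H_1$, $H_2$ each endpoint sits in, treating the identified vertex $v$ (the image of $v_1 = v_2$ after gluing) as belonging simultaneously to $V(H_1)$ and to $V(H_2)$ so that each internal pair in a given $H_i$ is captured exactly once.

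The structural input I would use is that distances inside each piece are preserved in the one-point union $G$: if $u, w \in V(H_i)$, then $d_G(u,w) = d_{H_i}(u,w)$, because any $u$-$w$ walk that strays into the other piece must enter and leave through the single cut vertex $v$ and cannot thereby be shorter. Consequently, the ordered pairs with both endpoints in $V(H_1)$ contribute exactly $C(H_1)$ to $C(G)$, and likewise those with both endpoints in $V(H_2)$ contribute $C(H_2)$.

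For the remaining cross pairs, say $u \in V(H_1) \setminus \{v_1\}$ and $w \in V(H_2) \setminus \{v_2\}$, every $u$-$w$ path in $G$ must pass through $v$, so
\[
d_G(u,w) = d_{H_1}(u,v_1) + d_{H_2}(v_2,w),
\]
and hence $2^{-d_G(u,w)}$ factors as a product of two terms, one depending only on $H_1$ and one only on $H_2$. The cross-pair double sum therefore splits as
\[
\sum_{u \in V(H_1) \setminus \{v_1\}} \sum_{w \in V(H_2) \setminus \{v_2\}} 2^{-d_{H_1}(u,v_1)}\, 2^{-d_{H_2}(v_2,w)} = C_{H_1}(v_1)\, C_{H_2}(v_2).
\]
Since $C(G)$ is a sum over \emph{ordered} pairs, the symmetric contribution with the roles of $H_1$ and $H_2$ swapped gives another copy of the same product, producing the coefficient $2$.

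The only real point to double-check, and hence the only place the argument could slip, is the bookkeeping around the identified vertex $v$: it must be counted exactly once in $V(G)$, which is arranged by including it (as $v_1$) in the $H_1$-internal sum and (as $v_2$) in the $H_2$-internal sum while excluding it from both sides of the cross-pair sum, so that no ordered pair of distinct vertices is counted twice or omitted. Once this decomposition is verified, the three contributions add to $C(H_1) + C(H_2) + 2 C_{H_1}(v_1) C_{H_2}(v_2)$, which is the identity to be proved.
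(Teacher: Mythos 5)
Your proposal is correct: the decomposition into the two internal (ordered-pair) sums plus the two cross sums, together with the facts that the one-point union preserves distances within each piece and that every path between the pieces passes through the cut vertex $v$ (so the cross distances split additively and the exponentials factor), yields exactly the claimed identity, and your bookkeeping of the identified vertex is handled properly. Note, however, that the paper offers no proof of this lemma to compare against --- it is quoted from Dangalchev's work \cite{Dan4} --- so your argument simply supplies the standard (and essentially the only natural) derivation of that known formula.
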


\begin{Lemma}\label{L3}
For $i=1,2,3$, let $H_i$ be a graph with $v_i \in V(H_i)$.  Let $u_1$ (resp. $u_2$) be the farthest vertex from $v_1$ (resp. $v_2$) in $H_1$ (resp. $H_2$) satisfying $C_{H_i}(u_i) \leq min _{1\leq i\leq 2}\{C_{H_i}(v_i)\}$. Let $G$ be the graph obtained from $H_1, H_2$ and $H_3$ by identifying $v_1, v_2$ and $v_3$ as a new vertex $v$. Let $G_1= G- \sum _{w \in N_{H_3}(v_3)}wv_3 + \sum _{w \in N_{H_3}(v_3)}wu_1$ and $G_2= G- \sum _{w \in N_{H_3}(v_3)}wv_3 + \sum _{w \in N_{H_3}(u_3)}wu_2$. Then either $G_1 < G$ or $  G_2 < G $.
\end{Lemma}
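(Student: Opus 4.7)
The plan is to apply Lemma~\ref{111} repeatedly to decompose $C(G)$, $C(G_1)$, and $C(G_2)$ into closed-form expressions in the subgraph closenesses $C(H_i)$ and the vertex-closenesses $C_{H_i}(v_i), C_{H_i}(u_i)$, and then read off the signs of the two differences directly from the closeness hypothesis on $u_1, u_2$.

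First I would view $G$ as built in two stages: identify $H_1$ and $H_2$ at $v := v_1 = v_2$ to form an intermediate graph $H$ (for which a short distance count yields $C_H(v) = C_{H_1}(v_1) + C_{H_2}(v_2)$), and then identify $v_3$ with $v$. Two applications of Lemma~\ref{111} give
\[
C(G) = \sum_{i=1}^{3} C(H_i) + 2 C_{H_1}(v_1) C_{H_2}(v_2) + 2\bigl( C_{H_1}(v_1) + C_{H_2}(v_2) \bigr) C_{H_3}(v_3).
\]
For $G_1$ the same two-stage strategy applies, but with $H_3$ first identified to $H_1$ at $u_1$ instead of at $v$: a direct distance count from $v_1$ in the resulting graph $H_1'$ gives $C_{H_1'}(v_1) = C_{H_1}(v_1) + 2^{-d_{H_1}(v_1,u_1)} C_{H_3}(v_3)$, and a second application of Lemma~\ref{111} produces a closed form for $C(G_1)$. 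Subtracting, the bulk of the terms cancel and one arrives at the clean identity
\[
C(G_1) - C(G) = 2 C_{H_3}(v_3) \Bigl[ \bigl( C_{H_1}(u_1) - C_{H_1}(v_1) \bigr) - \bigl( 1 - 2^{-d_{H_1}(v_1, u_1)} \bigr) C_{H_2}(v_2) \Bigr],
\]
together with the symmetric identity for $C(G_2) - C(G)$ obtained by swapping the subscripts $1$ and $2$.

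The hypothesis $C_{H_i}(u_i) \leq \min\{C_{H_1}(v_1), C_{H_2}(v_2)\}$ makes the first summand in each bracket non-positive; the factor $1 - 2^{-d_{H_i}(v_i,u_i)}$ is non-negative, so the second summand is non-negative; hence each bracket is $\leq 0$, and each is \emph{strictly} negative as soon as the corresponding $u_i$ is distinct from $v_i$ (so that $d_{H_i}(v_i, u_i) \geq 1$). The hard part will be verifying that at least one of $u_1 \neq v_1$, $u_2 \neq v_2$ actually holds under the lemma's hypotheses: the simultaneous degeneracy would force $v_1, v_2$ to be the unique closeness-minimizers of their respective $H_i$'s with equal values, and I expect the selection of $u_i$ as the \emph{farthest} admissible vertex from $v_i$ to rule out this boundary configuration in at least one of the two subgraphs, yielding $C(G_i) < C(G)$ for the corresponding index $i \in \{1,2\}$.
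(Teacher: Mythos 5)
Your proposal follows essentially the same route as the paper: the paper sets $Q = G - V(H_3)\setminus\{v_3\}$ (your intermediate graph obtained by identifying $v_1$ with $v_2$), applies Lemma~\ref{111} once to get $C(G_1)-C(G) = 2C_{H_3}(v_3)\bigl[C_Q(u_1)-C_Q(v)\bigr]$, and then evaluates $C_Q(u_1)-C_Q(v)$ by a direct distance count, arriving at exactly your bracket $\bigl(C_{H_1}(u_1)-C_{H_1}(v_1)\bigr)-\bigl(1-2^{-d_{H_1}(v_1,u_1)}\bigr)C_{H_2}(v_2)$; your second application of Lemma~\ref{111} is just a packaged form of that count, and your sign analysis coincides with the paper's. (The paper in fact proves both $C(G_1)<C(G)$ and $C(G_2)<C(G)$, of which the stated ``either/or'' is a weakening.)

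Concerning the ``hard part'' you flag: the paper does not address it either. Its proof drops the nonpositive term $C_{H_1}(u_1)-C_{H_1}(v_1)$ and declares the remaining term $\bigl(2^{-d_{H_1}(v_1,u_1)}-1\bigr)C_{H_2}(v_2)$ strictly negative, which silently assumes $u_1\neq v_1$ and $|V(H_2)|\geq 2$. Neither fact follows from the hypotheses as written --- if $H_1$ and $H_2$ are single vertices then $u_i=v_i$ is forced, $G_1=G_2=G$, and the conclusion fails --- so you should not expect to derive the nondegeneracy from the lemma statement alone. It holds in the one place the lemma is invoked (the $H_i$ are nontrivial branches at a cut vertex, so the farthest vertex $u_i$ is distinct from $v_i$ and $|V(H_{3-i})|\geq 2$), which means the defect lies in the lemma's formulation rather than in your argument; adding the hypothesis $|V(H_1)|,|V(H_2)|\geq 2$ would make both your proof and the paper's complete.
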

\begin{proof}
For convenience, let $Q= G - V(H_3) \setminus \{v_3\}$, $Q_1= G_1 - V(H_3) \setminus \{u_1\}$ and $Q_2= G_2 - V(H_3) \setminus \{u_2\}$. It is easy to see that $Q \cong Q_1 \cong Q_2$ and $C_{H_3}(v_3)=C_{H_3}(u_1)=C_{H_3}(u_2)$.
By Lemma \ref{111}, we have
\begin{equation*}\label{}
C(G)= C(H_3)+C(Q)+2C_{H_3}(v_3)C_{Q}(v),
\end{equation*}
and
\begin{equation*}\label{}
C(G_1)= C(H_3)+C(Q)+2C_{H_3}(v_3)C_{Q}(u_1).
\end{equation*}
This gives
\begin{equation*}\label{e1}
C(G_1)- C(G) = 2C_{H_3}(v_3)\left[C_{Q}(u_1)- C_{Q}(v)\right]
\end{equation*}
and
\begin{eqnarray*}\label{}
 &  & C_{Q}(u_1)- C_{Q}(v)   =   \sum_{x \in V(Q)\setminus \{u_1\}} 2^{-d_Q(u_1, x)} - \sum_{x \in V(Q)\setminus \{v\}} 2^{-d_Q(v, x)} \\
   & = &  \sum_{x \in V(H_1)\setminus \{u_1\}} 2^{-d_{H_1}(u_1, x)} + \sum_{x \in V(H_2)\setminus \{v_2\}} 2^{-(d_{H_2}(v_2, x) +  d_{H_1}(v_1, u_1))} \\
   &\quad -&  \sum_{x \in V(H_1)\setminus \{v_1\}} 2^{-d_{H_1}(v_1, x)}  -   \sum_{x \in V(H_2)\setminus \{v_2\}} 2^{-d_{H_2}(v_2, x)}\\
 &\leq  & \sum_{x \in V(H_2)\setminus \{v_2\}} 2^{-(d_{H_2}(v_2, x) +  d_{H_1}(v_1, u_1))}-\sum_{x \in V(H_2)\setminus \{v_2\}} 2^{-d_{H_2}(v_2, x)}<0.
\end{eqnarray*}
Hence, $C(G_1)< C(G)$.
Also, we have
\begin{equation*}\label{}
C(G_2)= C(H_3)+C(Q)+2C_{H_3}(v_3)C_{Q}(u_2).
\end{equation*}
We get
\begin{equation*}\label{e2}
C(G_2)- C(G) = 2C_{H_3}(v_3)\left[C_{Q}(u_2)- C_{Q}(v)\right]
\end{equation*}
and
\begin{eqnarray*}\label{}
 &  & C_{Q}(u_2)- C_{Q}(v)   =   \sum_{x \in V(Q)\setminus \{u_2\}} 2^{-d_Q(u_2, x)} - \sum_{x \in V(Q)\setminus \{v\}} 2^{-d_Q(v, x)} \\
   & = &  \sum_{x \in V(H_2)\setminus \{u_2\}} 2^{-d_{H_2}(u_2, x)} + \sum_{x \in V(H_1)\setminus \{v_1\}} 2^{-(d_{H_1}(v_1, x) +  d_{H_2}(v_2, u_2))} \\
   &\quad -&    \sum_{x \in V(H_2)\setminus \{v_2\}} 2^{-d_{H_2}(v_2, x)} -  \sum_{x \in V(H_1)\setminus \{v_1\}} 2^{-d_{H_1}(v_1, x)}\\
 &\leq  & \sum_{x \in V(H_1)\setminus \{v_1\}} 2^{-(d_{H_1}(v_1, x) +  d_{H_2}(v_2, u_2))}-\sum_{x \in V(H_1)\setminus \{v_1\}} 2^{-d_{H_1}(v_1, x)}<0.
\end{eqnarray*}
Hence, $C(G_2)< C(G)$.
\end{proof}

\begin{Lemma} \label{L1}
For $i=1,2, \dots, r$ and $r \geq 5$, let $H_i$ be a  graph with $u_i \in V(H_i)$.  Let $G$ be the graph obtained from $H_1, H_2, \dots, H_r$ and $C_r=x_1x_2 \cdots x_rx_1$ by identifying $u_i$ with $x_i$, $i=1,2, \dots, r$. Let $G' = G - u_1u_r +u_{r-2}u_r$. Then $C(G') < C(G)$.
\end{Lemma}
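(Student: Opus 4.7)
The natural plan is to reduce the inequality to a closed-form comparison on the cycle vertices $u_1,\dots,u_r$, and then analyze the resulting sum by a case-by-case inspection of how distances change. For each $i$, set
\[
s_i \;:=\; \sum_{v\in V(H_i)} 2^{-d_{H_i}(v,u_i)} \;=\; C_{H_i}(u_i)+1,
\qquad \delta_G(i,j) \;:=\; d_G(u_i,u_j),
\]
and define $\delta_{G'}(i,j)$ analogously. Because $u_i$ is a cut vertex of both $G$ and $G'$ that separates $V(H_i)\setminus\{u_i\}$ from the rest of the graph, any shortest path between $x\in H_i$ and $y\in H_j$ with $i\neq j$ factors as $d_{H_i}(x,u_i)+\delta(i,j)+d_{H_j}(u_j,y)$. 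Summing this factorization (equivalently, applying Lemma~\ref{111} inductively over the $r$ identifications), the intra-$H_i$ contributions $\sum_i C(H_i)$ are common to $G$ and $G'$ and cancel, leaving
\[
C(G)-C(G') \;=\; 2\sum_{1\le i<j\le r} s_i\, s_j\,\bigl(2^{-\delta_G(i,j)}-2^{-\delta_{G'}(i,j)}\bigr).
\]
So it suffices to prove the displayed sum is strictly positive.

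Next I would tabulate the cycle distances. In $G$, $\delta_G(i,j)=\min\{|i-j|,\,r-|i-j|\}$. In $G'$, the $r$-cycle is replaced by the path $u_1u_2\cdots u_{r-2}$ glued to the triangle $u_{r-2}u_{r-1}u_r$, so $\delta_{G'}(i,j)=j-i$ for $i<j\le r-2$; $\delta_{G'}(i,r-1)=\delta_{G'}(i,r)=r-1-i$ for $i\le r-2$; and $\delta_{G'}(r-1,r)=1$. A direct comparison of these two tables shows that $\delta_{G'}(i,j)\ge \delta_G(i,j)$ for \emph{every} pair except the pairs $(u_{r-k},u_r)$ with $2\le k\le \lfloor r/2\rfloor$, where the new edge $u_{r-2}u_r$ creates a shortcut and $\delta_{G'}=k-1<k=\delta_G$. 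The exceptional pairs are therefore the only source of negative terms in the sum, each of magnitude $s_{r-k}\,s_r\cdot 2^{-k}$.

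The main obstacle is to show that these negative terms are dominated by the positive ones. The largest positive contributions come from two sources: first, the pairs $(u_\ell,u_r)$ with small $\ell$, where the shortcut goes the other way and the term is $s_\ell\,s_r\bigl(2^{-\ell}-2^{-(r-1-\ell)}\bigr)$; and second, the ``long-arc'' pairs $(u_i,u_j)$ with $i<j\le r-2$ and $j-i>r/2$, contributing $s_i s_j\bigl(2^{-(r-j+i)}-2^{-(j-i)}\bigr)>0$. My plan is to match each negative term $(u_{r-k},u_r)$ against one or more of these positive terms, using the geometric decay of the kernel $2^{-d}$ so that a single kernel value $2^{-\ell}$ with $\ell$ small can absorb the whole tail $\sum_{k\ge 2} 2^{-k}$ of negative kernels. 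The weights $s_i$ appear asymmetrically in the negative and positive partners, so a clean one-to-one pairing is not available; I expect the argument to proceed by aggregating positive contributions across all indices simultaneously and invoking the uniform lower bound $s_i\ge 1$ together with the precise arithmetic identity $\delta_{G'}(i,r)-\delta_G(i,r)=-1$ on the exceptional pairs. This aggregation, and verifying that it is robust to very uneven weights $s_i$, is the step I anticipate being the most delicate.
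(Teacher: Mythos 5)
Your reduction of $C(G)-C(G')$ to the weighted sum
\[
C(G)-C(G') \;=\; 2\sum_{1\le i<j\le r} s_i s_j\bigl(2^{-\delta_G(i,j)}-2^{-\delta_{G'}(i,j)}\bigr),
\qquad s_i=C_{H_i}(u_i)+1,
\]
is correct and coincides with the paper's identity \eqref{e} (the paper's bracket $C(u_i)C(u_j)+C(u_i)+C(u_j)+1$ factors as $s_is_j$), and your inventory of which pairs gain or lose distance under the edge switch is also accurate. But the proof stops exactly where the content lies: the domination of the negative terms $-s_{r-k}s_r2^{-k}$ by the positive ones is only announced as a plan, and you yourself concede that a clean pairing is blocked by the asymmetry of the weights and that the aggregation must be ``robust to very uneven weights $s_i$.'' That is not a routine verification you may defer --- it is the whole difficulty, and it cannot be carried out, because the inequality is false when the weights are uneven. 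Take $r=6$, let $H_3$ be the star $K_{1,N}$ with centre $u_3$, and let every other $H_i$ be a single vertex. The only changed pair of cycle vertices carrying the weight $s_3=N/2+1$ is the exceptional pair $(u_3,u_6)$, where $\delta$ drops from $3$ to $2$; no distance-increasing pair involves $u_3$ at all. A direct count of all changed pairs (including the $N$ leaf--$u_6$ pairs, whose distance also drops) gives $C(G')-C(G)=(N-6)/8>0$ for $N\ge 7$, so the lemma fails. Hence no argument using only the uniform bound $s_i\ge1$ can close your gap; the delicate step you flagged is precisely where the statement breaks.

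For comparison, the paper's proof founders at the same point: after isolating the shortcut terms it merges the sum $2\sum_{i=(r+1)/2}^{r-2}s_is_r\bigl(2^{-(r-1-i)}-2^{-(r-i)}\bigr)$ with $2\sum_{i=1}^{(r-3)/2}s_is_r\bigl(2^{-(r-1-i)}-2^{-i}\bigr)$ into the single sum $2\sum_{i=1}^{(r-3)/2}s_is_r\bigl(2^{-(r-1-i)}-2^{-(i+1)}\bigr)$, which tacitly replaces the weight $s_{r-1-i}$ on the shortcut terms by $s_i$, i.e.\ assumes $s_i=s_{r-1-i}$. So your diagnosis of where the difficulty sits is sharper than the paper's treatment of it, but neither your sketch nor the paper's computation establishes the claim; the lemma would need an additional hypothesis (for instance, symmetry or near-equality of the attached branches, or a different choice of which edge to delete and insert depending on where the large branches sit) to be salvageable.
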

\begin{proof}
By the definition of closeness, we have
\begin{eqnarray*}\label{}
&&   C(G')- C(G)  \\
&=&\sum _{i=1}^{r} \sum_{x \in V(H_i) }\sum_{y \in V(H_i)\setminus \{x\}} \left(2^{-d_{G'}(x,y)}-2^{-d_G(x,y)}\right)\\
 &   + &      2\sum _{1\leq i\leq j \leq r}\left [\sum_{x \in V(H_i) \setminus \{u_i\}}\sum_{y \in V(H_j)\setminus \{u_j\}}  \left(2^{-d_{G'}(x,y)}-2^{-d_G(x,y)}\right)+  \left(2^{-d_{G'}(u_i, u_j)}-2^{-d_G(u_i, u_j)}\right)\right.\\
 &  + & \left.  \sum_{x \in V(H_i) \setminus \{u_i\}} \left (2^{-d_{G'}(x,u_j)}-2^{-d_G(x,u_j)}\right)+  \sum_{y \in V(H_j) \setminus \{u_j\}}  \left(2^{-d_{G'}(y,u_i)}-2^{-d_G(y, u_i)}\right) \right]\\
\end{eqnarray*}
It is easy to see that
$ d_G(x,y) =  d_{G'}(x,y)$,   for all $\{x, y \} \subseteq V(H_i)$ with $1 \leq i \leq r$.
 Therefore,
 \begin{eqnarray}\label{e}
  & & C(G')- C(G)  \nonumber \\
  & = &    2\sum _{1\leq i\leq j \leq r} \left [\sum_{x \in V(H_i) \setminus \{u_i\}}\sum_{y \in V(H_j)\setminus \{u_j\}}  2^{-(d_G(x,u_i) + d_G(u_j,y))} +  \sum_{x \in V(H_i) \setminus \{u_i\}} 2^{-d_{G}(x,u_i)} \right.\nonumber\\
   &   + & \left.        \sum_{y \in V(H_j) \setminus \{u_j\}}  2^{-d_{G'}(y,u_i)} + 1\right]\times \left(2^{-d_{G'}(u_i, u_j)}-2^{-d_G(u_i, u_j)}\right)\nonumber\\
                    & = &     2\sum _{1\leq i\leq j \leq r}\left [C(u_i)\cdot C(u_j)+ C(u_i)+ C(u_j)+1\right]\times \left(2^{-d_{G'}(u_i, u_j)}-2^{-d_G(u_i, u_j)}\right).
\end{eqnarray}
If $r$ is odd. Then it is easy to see that $d_{G}(u_i, u_j)= d_{G'}(u_i, u_j)$ for   $(i, j) \in V_1\cup V_2\cup V_3$, where $V_1= \{1 \leq i \leq  \frac{r-3}{2}, i+1 \leq j \leq  \frac{r-1}{2}+i\}$, $V_2=\{\frac{r-1}{2}\leq i \leq  r-2, i+1 \leq j \leq r-1\}$ and $V_3=\{i=r-1, j=r\}$.
From \eqref{e}, we have
 \begin{eqnarray*}\label{}
   &  & C(G')-   C(G)          \\
    & = & 2\left [ \sum _{i=1}^{\frac{r-3}{2}}\sum _{j= \frac{r+1}{2}+i}^{r} + \sum _{i=\frac{r-1}{2}}^{r-2}\sum _{j= r} \right]\left [C_{H_i}(u_i)\cdot C_{H_j}(u_j)+ C_{H_i}(u_i)+ C_{H_j}(u_j)+1\right]\\
    &  \times &    \left(2^{-d_{G'}(u_i, u_j)}-2^{-d_G(u_i, u_j)}\right)\\
    & = & 2\sum _{i=1}^{\frac{r-3}{2}}\sum _{j= \frac{r+1}{2}+i}^{r-1} \left [C_{H_i}(u_i)\cdot C_{H_j}(u_j)+ C_{H_i}(u_i)+ C_{H_j}(u_j)+1\right] \times \left(2^{-(j-i)}-2^{-(r-(j-i))}\right)\\
    &   + & 2\sum _{i=1}^{\frac{r-3}{2}} \left [C_{H_i}(u_i)\cdot C_{H_r}(u_r)+ C_{H_i}(u_i)+ C_{H_r}(u_r)+1\right] \times \left(2^{-(r-1-i)}-2^{-i}\right)\\
    &  + & 2\sum _{i= \frac{r+1}{2}}^{r-2} \left [C_{H_i}(u_i)\cdot C_{H_r}(u_r)+ C_{H_i}(u_i)+ C_{H_r}(u_r)+1\right] \times \left(2^{-(r-1-i)}-2^{-(r-i)}\right).\\
 \end{eqnarray*}
Note that $j-i > r-(j-i)$ for $\frac{r+1}{2}+i \leq j \leq r-1$.  We deduce that
 \begin{eqnarray*}\label{}
  && C(G')-   C(G)\\
      & < &   2\sum _{i=1}^{\frac{r-3}{2}} \left [C_{H_i}(u_i)\cdot C_{H_r}(u_r)+ C_{H_i}(u_i)+ C_{H_r}(u_r)+1\right]\left(2^{-(r-1-i)}-2^{-i}\right)\\
                           & \quad  + & 2\sum _{i= \frac{r+1}{2}}^{r-2} \left [C_{H_i}(u_i)\cdot C_{H_r}(u_r)+ C_{H_i}(u_i)+ C_{H_r}(u_r)+1\right]\left(2^{-(r-1-i)}-2^{-(r-i)}\right) \\
                  & =&   2\sum _{i=1}^{\frac{r-3}{2}} \left [C_{H_i}(u_i)\cdot C_{H_r}(u_r)+ C_{H_i}(u_i)+ C_{H_r}(u_r)+1\right]\left(2^{-(r-1-i)}-2^{-(i+1)}\right).
  \end{eqnarray*}
Note that $r-1-i > i+1$ for $1\leq i \leq \frac{r-3}{2}$, implying $2^{-(r-1-i)}-2^{-(i+1)}<0$.
Hence, $C(G') < C(G)$.

If $r$ is even. Then it is easy to see that $d_{G}(u_i, u_j)= d_{G'}(u_i, u_j)$ for   $(i, j) \in W_1\cup W_2\cup W_3$, where $W_1= \{1 \leq i \leq  \frac{r-4}{2}, i+1 \leq j \leq  \frac{r}{2}+i\}$, $W_2=\{\frac{r-2}{2}\leq i \leq  r-2, i+1 \leq j \leq r-1\}$ and $W_3=\{i=r-1, j=r\}$.
From \eqref{e}, we have
 \begin{eqnarray*}\label{}
   &  & C(G')-   C(G)          \\
    & = & 2\left [ \sum _{i=1}^{\frac{r-4}{2}}\sum _{j= \frac{r+2}{2}+i}^{r} + \sum _{i=\frac{r-2}{2}}^{r-2}\sum _{j= r} \right]\left [C_{H_i}(u_i)\cdot C_{H_j}(u_j)+ C_{H_i}(u_i)+ C_{H_j}(u_j)+1\right]\\
    &  \times &    \left(2^{-d_{G'}(u_i, u_j)}-2^{-d_G(u_i, u_j)}\right)\\
    & = & 2\sum _{i=1}^{\frac{r-4}{2}}\sum _{j= \frac{r+2}{2}+i}^{r-1} \left [C_{H_i}(u_i)\cdot C_{H_j}(u_j)+ C_{H_i}(u_i)+ C_{H_j}(u_j)+1\right] \times \left(2^{-(j-i)}-2^{-(r-(j-i))}\right)\\
    &   + & 2\sum _{i=1}^{\frac{r-4}{2}} \left [C_{H_i}(u_i)\cdot C_{H_r}(u_r)+ C_{H_i}(u_i)+ C_{H_r}(u_r)+1\right] \times \left(2^{-(r-1-i)}-2^{-i}\right)\\
    &   + & \left [C(u_{\frac{r-2}{2}})\cdot C_{H_r}(u_r)+ C(u_{\frac{r-2}{2}})+ C_{H_r}(u_r)+1\right] \times \left(2^{-(r-i)}-2^{-(r-1-i)}\right)\\
    &  + & 2\sum _{i= \frac{r}{2}}^{r-2} \left [C_{H_i}(u_i)\cdot C_{H_r}(u_r)+ C_{H_i}(u_i)+ C_{H_r}(u_r)+1\right] \times \left(2^{-(r-1-i)}-2^{-(r-i)}\right).
 \end{eqnarray*}
Note that $j-i > r-(j-i)$ for $\frac{r+2}{2}+i \leq j \leq r-1$.  We get
 \begin{eqnarray*}\label{}
  && C(G')-   C(G)\\
      & < &  2\sum _{i=1}^{\frac{r-4}{2}} \left [C_{H_i}(u_i)\cdot C_{H_r}(u_r)+ C_{H_i}(u_i)+ C_{H_r}(u_r)+1\right] \times \left(2^{-(r-1-i)}-2^{-i}\right)\\
    &   + & \left [C(u_{\frac{r-2}{2}})\cdot C_{H_r}(u_r)+ C(u_{\frac{r-2}{2}})+ C_{H_r}(u_r)+1\right] \times \left(2^{-(r-i)}-2^{-(r-1-i)}\right)\\
    &  + & 2\sum _{i= \frac{r}{2}}^{r-2} \left [C_{H_i}(u_i)\cdot C_{H_r}(u_r)+ C_{H_i}(u_i)+ C_{H_r}(u_r)+1\right] \times \left(2^{-(r-1-i)}-2^{-(r-i)}\right)\\
     & =&   2\sum _{i=1}^{\frac{r-2}{2}} \left [C_{H_i}(u_i)\cdot C_{H_r}(u_r)+ C_{H_i}(u_i)+C_{H_r}(u_r)+1\right]\left(2^{-(r-1-i)}-2^{-(i+1)}\right).
  \end{eqnarray*}
 Note  that $r-1-i > i+1$ for $1\leq i \leq \frac{r-2}{2}$, implying $2^{-(r-1-i)}-2^{-(i+1)}<0$.
Hence, $C(G') < C(G)$.
\end{proof}

\begin{Lemma} \label{L2}
Let $G$ be a  graph with a cycle $C_4=u_1u_2u_3u_4u_1$ such that $G-E(C_4)$ has exactly four components. Let $H_i$ be the component of $G-E(C_4)$ containing $u_i$, where $1 \leq i \leq 4$ and $|V(H_1)|= max_{1 \leq i \leq 4}\{|V(H_i)|\}$. Let $G' = G - u_1u_4 +u_2u_4$. Then $C(G') \leq C(G)$ with equality if and only if $|V(H_1)|=|V(H_2)|$.
\end{Lemma}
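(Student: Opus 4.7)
The plan is to express $C(G')-C(G)$ as a simple function of the components' internal weights and then exploit a monotonicity in $|V(H_i)|$. First I would use the fact that each $H_i$ meets the rest of $G$ only at $u_i$: for $x\in V(H_i)$ and $y\in V(H_j)$ with $i\neq j$, any shortest path must transit $u_i$ and $u_j$, so $d_G(x,y)=d_{H_i}(x,u_i)+d_G(u_i,u_j)+d_{H_j}(u_j,y)$, and likewise in $G'$. Setting $S_i:=\sum_{x\in V(H_i)}2^{-d_{H_i}(x,u_i)}=1+C_{H_i}(u_i)$, iterated application of Lemma~\ref{111} (or a direct expansion) gives
\[
C(G)=\sum_{i=1}^{4}C(H_i)+2\sum_{1\le i<j\le 4}S_iS_j\cdot 2^{-d_G(u_i,u_j)},
\]
and the same identity with $d_{G'}$ for $C(G')$. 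The key observation is that $C(H_i)$ and $S_i$ coincide in $G$ and $G'$, so the difference lives entirely in the factors $2^{-d(u_i,u_j)}$.

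Next I would tabulate the pairwise distances along the gadget. In $G$ the vertices $u_1,\ldots,u_4$ sit on the 4-cycle $u_1u_2u_3u_4u_1$; in $G'=G-u_1u_4+u_2u_4$ they form the triangle $u_2u_3u_4$ with the pendant edge $u_1u_2$. A quick case-check shows that only two pair distances change: $d(u_1,u_4)$ increases from $1$ to $2$, while $d(u_2,u_4)$ drops from $2$ to $1$; every other $d(u_i,u_j)$ is preserved. Subtracting the two formulas term by term therefore telescopes to
\[
C(G')-C(G)=2S_1S_4\bigl(\tfrac14-\tfrac12\bigr)+2S_2S_4\bigl(\tfrac12-\tfrac14\bigr)=\frac{S_4}{2}(S_2-S_1).
\]

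Since $S_4\ge 1$, the sign of $C(G')-C(G)$ is governed by $S_2-S_1$, and equality occurs iff $S_1=S_2$. The main obstacle—the one where the hypothesis $|V(H_1)|=\max_i|V(H_i)|$ has to do its work—is upgrading this size condition to the weight inequality $S_1\ge S_2$, with equality precisely when $|V(H_1)|=|V(H_2)|$. For wholly unrestricted attaching graphs this implication is false (a long path rooted at $u_1$ against a dense star centred at $u_2$ can reverse it), so the argument must exploit the shape of the $H_i$'s available where Lemma~\ref{L2} is invoked inside the proof of Theorem~\ref{a}: after earlier reductions such as Lemma~\ref{L3} each $H_i$ is a path rooted at $u_i$, for which $S_i=2-2^{1-|V(H_i)|}$ is a strictly increasing function of $|V(H_i)|$. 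Once this monotonicity is in hand, both the inequality $C(G')\le C(G)$ and the claimed equality characterization drop straight out of the displayed identity, completing the proof.
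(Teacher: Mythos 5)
Your derivation of the key identity is correct and is essentially the paper's own computation: after noting that only $d(u_1,u_4)$ (which grows from $1$ to $2$) and $d(u_2,u_4)$ (which drops from $2$ to $1$) change, both you and the paper arrive at
\[
C(G')-C(G)=-\tfrac12\bigl(1+C_{H_4}(u_4)\bigr)\bigl(C_{H_1}(u_1)-C_{H_2}(u_2)\bigr),
\]
which is exactly your $\tfrac{S_4}{2}(S_2-S_1)$ with $S_i=1+C_{H_i}(u_i)$. You are also right that the entire difficulty is concentrated in the last step: one must pass from the size hypothesis $|V(H_1)|\ge |V(H_2)|$ to the closeness inequality $C_{H_1}(u_1)\ge C_{H_2}(u_2)$, and this implication is false for unrestricted components (your long-path-versus-star example is precisely the kind of counterexample that defeats it). It should be said plainly that the paper does not overcome this either: its proof simply asserts
\[
\bigl(C_{H_4}(u_4)+1\bigr)\bigl[C_{H_1}(u_1)-C_{H_2}(u_2)\bigr](-2^{-1})\le \bigl(C_{H_4}(u_4)+1\bigr)\bigl(|V(H_1)|-|V(H_2)|\bigr)(-2^{-2}),
\]
i.e.\ $C_{H_1}(u_1)-C_{H_2}(u_2)\ge \tfrac12\bigl(|V(H_1)|-|V(H_2)|\bigr)$, with no justification, and this is false in general; moreover the correct equality condition coming out of the identity is $C_{H_1}(u_1)=C_{H_2}(u_2)$, which neither implies nor is implied by $|V(H_1)|=|V(H_2)|$. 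So you have in fact located a genuine flaw in the lemma rather than merely failing to prove it.

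The gap in your own write-up is the proposed repair. You claim that where Lemma~\ref{L2} is invoked inside the proof of Theorem~\ref{a}, the earlier reductions force each $H_i$ to be a path rooted at $u_i$, so that $S_i$ is monotone in $|V(H_i)|$. That is not the situation in the paper: Lemma~\ref{L2} is applied at the very first stage of the argument, while one is still showing that every cycle of a minimizer has length $3$, and at that point the components hanging off the $C_4$ are arbitrary sub-cacti (they may contain further cycles, branchings, or high-degree vertices); Lemma~\ref{L3} only rules out three blocks sharing a common vertex and does not reduce the $H_i$ to paths. A genuine fix must either replace the hypothesis $|V(H_1)|=\max_i|V(H_i)|$ by $C_{H_1}(u_1)=\max_i C_{H_i}(u_i)$ (which costs nothing in the application, since one is free to choose which edge of the $C_4$ to delete and where to reattach it, and the equality case then reads $C_{H_1}(u_1)=C_{H_2}(u_2)$), or else restrict the lemma to the class of components that can actually occur and prove the required monotonicity there. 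As written, both your argument and the paper's stop one step short of the stated conclusion, and that step cannot be taken in the stated generality.
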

\begin{proof} By the definition of closeness, we get
\begin{eqnarray*}\label{}
 &&C(G')- C(G)   =  2\sum_{x \in V(H_1) \setminus \{u_1\}}\sum_{y \in V(H_4)\setminus \{u_4\}} \left(2^{-d_{G'}(x,y)}-2^{-d_G(x,y)}\right) \\
                &   + &   2\sum_{y \in V(H_4) \setminus \{u_4\}} \left(2^{-d_{G'}(u_1,y)}-2^{-d_G(u_1,y)}\right)+ 2\sum_{x \in V(H_1) \setminus \{u_1\}} \left(2^{-d_{G'}(u_4,x)}-2^{-d_G(u_4,x)}\right)\\
                &  + &  2\left(2^{-2}-2^{-1}\right) +2\sum_{x \in V(H_2) \setminus \{u_2\}}\sum_{y \in V(H_4)\setminus \{u_4\}} \left(2^{-d_{G'}(x,y)}-2^{-d_G(x,y)}\right)\\
                & + &   2\sum_{y \in V(H_4) \setminus \{u_4\}} \left(2^{-d_{G'}(u_2,y)}-2^{-d_G(u_2,y)}\right)+ 2\sum_{x \in V(H_2) \setminus \{u_2\}} \left(2^{-d_{G'}(u_4,x)}-2^{-d_G(u_4,x)}\right)\\
                &  + &   2(2^{-1}-2^{-2}).
 \end{eqnarray*}
Note that $d_{G'}(x, u_i)=d_{G}(x, u_i)$ for all $x \in V(H_i)\setminus \{u_i\}$ with $1 \leq i \leq r$. Therefore, we get
\begin{eqnarray*}\label{}
 &&C(G')- C(G)     \\
 &=& 2\left [ \sum_{x \in V(H_1) \setminus \{u_1\}}\sum_{y \in V(H_4)\setminus \{u_4\}} 2^{-(d_{G}(x,u_1)+ d_{G}(u_4,y))}   +\sum_{y \in V(H_4) \setminus \{u_4\}} 2^{-d_{G}(u_4,y)}\right.\\
 &+& \left. \sum_{x \in V(H_1) \setminus \{u_1\}}2^{-d_G(u_1,x)} \right]\times(2^{-2}-2^{-1})+ 2 \left [ \sum_{x \in V(H_2) \setminus \{u_2\}}\sum_{y \in V(H_4)\setminus \{u_4\}} 2^{-(d_{G}(x,u_2) + d_{G}(u_4,y))}\right.\\
 &  + & \left. \sum_{y \in V(H_4) \setminus \{u_4\}} 2^{-d_G(u_4,y)} + \sum_{x \in V(H_2) \setminus \{u_2\}} 2^{-d_{G}(u_2,x)}\right]\times(2^{-1}-2^{-2})\\
 & =  & 2\left[C_{H_1}(u_1)C_{H_4}(u_4) + C_{H_1}(u_1) + C_{H_4}(u_4)\right]\times(2^{-2}-2^{-1})\\
  &  + & 2\left[C_{H_2}(u_2)C_{H_4}(u_4) + C_{H_2}(u_2) + C_{H_4}(u_4)\right] \times (2^{-1}-2^{-2})\\
& =  & (C_{H_4}(u_4)+1)\left[C_{H_1}(u_1) - C_{H_2}(u_2)\right](-2^{-1})\\
& \leq & (C_{H_4}(u_4)+1)\left(|V(H_1)|-|V(H_2)|\right)(-2^{-2})\leq 0
\end{eqnarray*}
with equality if and only if $|V(H_1)|=|V(H_2)|$.
\end{proof}

\begin{Lemma} \label{L4}
Let $G$ be a cactus with a cycle $C_3=u_1u_2u_3u_1$ such that $G-E(C_3)$ has exactly three components. Let $H_i$ be the end-blocks of $G-E(C_3)$ containing $u_i$.  Let $v_1$ (resp. $v_2$) be the farthest vertex from $u_1$ (resp. $u_2$) in $H_1$ (resp. $H_2$) satisfying $C_{H_i}(v_i) \leq min_{1\leq i\leq 2}\{C_{H_i}(u_i)\}$. Let $G_1= G- \sum _{w \in N_{H_3}(u_3)}wu_3 + \sum _{w \in N_{H_3}(u_3)}wv_1$ and $G_2= G- \sum _{w \in N_{H_3}(u_3)}wu_3 + \sum _{w \in N_{H_3}(u_3)}wv_2$. Then either $C(G_1) < C(G)$ or $  C(G_2) < C(G) $.
\end{Lemma}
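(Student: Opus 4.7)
The plan is to parallel the proof of Lemma~\ref{L3}, with the triangle $C_3$ taking the role of the common identification vertex $v$. First, set $A := G[V(H_1) \cup V(H_2) \cup \{u_3\}]$, the subgraph consisting of $H_1$, $H_2$, and the triangle $u_1u_2u_3$; then $G$ is the amalgamation of $A$ and $H_3$ at their common copy of $u_3$, while $G_1$ (respectively $G_2$) is the same amalgamation but with the attachment on the $A$-side shifted from $u_3$ to $v_1$ (resp.\ $v_2$). Applying Lemma~\ref{111} to each of the three graphs yields
\[
C(G_i) - C(G) = 2\,C_{H_3}(u_3)\bigl[C_A(v_i) - C_A(u_3)\bigr], \qquad i = 1,2,
\]
so the task reduces to showing that at least one of $C_A(v_1), C_A(v_2)$ is strictly smaller than $C_A(u_3)$.

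Next I will compute $C_A(v_1)$ and $C_A(u_3)$ in closed form by partitioning the defining sums over $V(H_1), V(H_2)$, and $\{u_3\}$. The triangle geometry makes the distances transparent: from $u_3$ every $x \in V(H_i)$ lies at distance $1 + d_{H_i}(u_i, x)$; from $v_1$ every $x \in V(H_2)$ lies at distance $d_{H_1}(v_1, u_1) + 1 + d_{H_2}(u_2, x)$ (via the triangle edge $u_1u_2$), and $d_A(v_1, u_3) = d_{H_1}(v_1, u_1) + 1$. Writing $\alpha := d_{H_1}(v_1, u_1)$ and $\beta := d_{H_2}(v_2, u_2)$, the expansions collapse to
\[
C_A(v_1) - C_A(u_3) = \bigl[C_{H_1}(v_1) - \tfrac{1}{2}C_{H_1}(u_1)\bigr] - (1 - 2^{-\alpha})\bigl[1 + \tfrac{1}{2}C_{H_2}(u_2)\bigr],
\]
together with the symmetric identity obtained by interchanging the roles of $H_1, H_2$ and replacing $\alpha$ by $\beta$.

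The main obstacle is the sign analysis of these expressions, which is delicate because the first bracket (governed by the closeness bound) can be positive while the second bracket (governed by the distance) is negative. The extremal choice of $v_1$ and $v_2$ -- farthest in $H_1, H_2$ subject to a closeness cap -- is designed precisely to make the negative term dominate. In detail, the hypothesis $C_{H_i}(v_i) \leq \min\{C_{H_1}(u_1), C_{H_2}(u_2)\}$ forces $C_{H_i}(v_i) \leq C_{H_i}(u_i)$, bounding the first bracket by $\tfrac{1}{2}C_{H_i}(u_i)$; meanwhile the farthest choice guarantees $\alpha \geq 1$ or $\beta \geq 1$ in any nondegenerate configuration, so $(1 - 2^{-\alpha}) \geq \tfrac{1}{2}$ holds on the relevant side. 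A short case split according to which of $C_{H_1}(u_1), C_{H_2}(u_2)$ realizes the minimum then identifies the ``good'' choice between $v_1$ and $v_2$ and pushes the corresponding difference strictly below zero, giving $C(G_1) < C(G)$ or $C(G_2) < C(G)$, as asserted.
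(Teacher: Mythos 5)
Your plan follows the same route as the paper: decompose $G$, $G_1$, $G_2$ via Lemma~\ref{111} at the common vertex $u_3$ of $A$ (the paper's $M$) and $H_3$, reduce to comparing $C_A(v_i)$ with $C_A(u_3)$, and split into cases according to which of $C_{H_1}(u_1), C_{H_2}(u_2)$ is smaller. Your closed-form identity is correct: indeed $C_A(u_3)=1+\tfrac12 C_{H_1}(u_1)+\tfrac12 C_{H_2}(u_2)$ and $C_A(v_1)=C_{H_1}(v_1)+2^{-(\alpha+1)}\bigl[2+C_{H_2}(u_2)\bigr]$, which gives exactly the displayed formula. This is a cleaner packaging than the paper's term-by-term estimate.

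However, the sign analysis as you have sketched it does not close. With only the ingredients you state (first bracket $\le\tfrac12 C_{H_i}(u_i)$ via the closeness cap, and $1-2^{-\alpha}\ge\tfrac12$ via $\alpha\ge1$), the case $C_{H_1}(u_1)\le C_{H_2}(u_2)$ yields
\[
C_A(v_1)-C_A(u_3)\;\le\;\tfrac12 C_{H_1}(u_1)-\tfrac12-\tfrac14 C_{H_1}(u_1)\;=\;\tfrac14 C_{H_1}(u_1)-\tfrac12,
\]
which is strictly negative only if $C_{H_1}(u_1)<2$. That bound is false for general graphs $H_1$ (take $u_1$ of large degree), so you must invoke the hypothesis that $H_1$ and $H_2$ are end-blocks of a cactus, hence cycles or cut edges, for which the closeness of any vertex is $\tfrac12$ (edge) or $2-3\cdot2^{-m/2}$ resp.\ $2-2^{-(m-3)/2}$ (cycle $C_m$), always $<2$. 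This is precisely where the paper uses the structural inequality $|V(H_j)|-1\le 2\,d_{H_j}(v_j,u_j)$; your write-up uses the end-block hypothesis only to get $\alpha,\beta\ge1$, which is not enough. Adding the one-line observation $C_{H_j}(u_j)<2$ for end-blocks completes your argument; without it the final step fails.
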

\begin{proof}
Let $M= G - V(H_3) \setminus \{u_3\}$, $M_1= G_1 - V(H_3) \setminus \{u_1\}$ and $M_2= G_2 - V(H_3) \setminus \{u_2\}$. One has $M \cong M_1 \cong M_2$ and $C_{H_3}(u_3)=C_{H_3}(v_1)=C_{H_3}(v_2)$.
By Lemma \ref{111}, we have
\begin{equation*}\label{}
C(G)= C(H_3)+C(M)+2C_{H_3}(u_3)C_{M}(u_3),
\end{equation*}
\begin{equation*}\label{}
C(G_1)= C(H_3)+C(M)+2C_{H_3}(u_3)C_{M}(v_1),
\end{equation*}
and
\begin{equation*}\label{}
C(G_2)= C(H_3)+C(M)+2C_{H_3}(u_3)C_{M}(v_2).
\end{equation*}
This gives
\begin{equation}\label{e11}
C(G_1)- C(G) = 2C_{H_3}(u_3)\left[C_{M}(v_1)- C_{M}(u_3)\right],
\end{equation}
and
\begin{equation}\label{e12}
C(G_2)- C(G) = 2C_{H_3}(u_3)\left[C_{M}(v_2)- C_{M}(u_3)\right].
\end{equation}
If $C_{H_1}(u_1) \leq C_{H_2}(u_2)$. Then from \eqref{e11}, we get
 \begin{eqnarray*}\label{}
 &  & C_{M}(v_1)- C_{M}(u_3)   =   \sum_{x \in V(M)\setminus \{v_1\}} 2^{-d_M(v_1, x)} - \sum_{x \in V(M)\setminus \{u_3\}} 2^{-d_M(u_3,x)} \\
  & = &  \sum_{x \in V(H_1)\setminus \{v_1\}} 2^{-d_{H_1}(v_1, x)} + 2^{-(d_{H_1}(v_1, u_1)+1)}+ \sum_{x \in V(H_2)\setminus \{u_2\}} 2^{-(d_{H_2}(u_2,x)+1 +  d_{H_1}(v_1,u_1))}\\
  & \quad + &           2^{-(d_{H_1}(v_1, u_1)+1)}   -\sum_{x \in V(H_1)\setminus \{u_1\}} 2^{-(d_{H_1}(u_1, x)+1)} -2^{-1} - \sum_{x \in V(H_2)\setminus \{u_2\}} 2^{-(d_{H_2}(u_2,x)+1)}-2^{-1}\\
   & \leq &  2^{-d_{H_2}(v_2, u_2)} -2^{-1}+\sum_{x \in V(H_2)\setminus \{u_2\}} 2^{-(d_{H_2}(u_2,x)+1 +  d_{H_2}(v_2,u_2))}-2^{-1}\\
   & \leq &  2^{-d_{H_2}(v_2, u_2)} -2^{-1}+|V(H_2)-1| [2^{-(d_{H_2}(v_2,u_2)+2)}]-2^{-1}.
  \end{eqnarray*}
Note that $d_{H_2}(v_2, u_2) \geq 1$ and $G$  is a cactus, so  $H_2$ must be either a cycle or a cut edge,  $|V(H_2)-1| \leq 2d_{H_2}(v_2, u_2)$. Therefore, we get
 $C_{A}(v_1)- C_{A}(u_3)  \leq 2 \times 2^{-3}-2^{-1}<0$. Hence, $C(G_1)< C(G)$.

If $C_{H_1}(u_1) > C_{H_2}(u_2)$. Then by \eqref{e12}, we have
 \begin{eqnarray*}\label{}
 &  & C_{M}(v_2)- C_{M}(u_3)   =   \sum_{x \in V(M)\setminus \{v_2\}} 2^{-d_M(v_2, x)} - \sum_{x \in V(M)\setminus \{u_3\}} 2^{-d_M(u_3,x)} \\
  & = &  \sum_{x \in V(H_2)\setminus \{v_2\}} 2^{-d_{H_2}(v_2, x)} + 2^{-(d_{H_2}(v_2, u_2)+1)}+ \sum_{x \in V(H_1)\setminus \{u_1\}} 2^{-(d_{H_1}(u_1,x)+1 +  d_{H_2}(v_2,u_2))}\\
  & \quad + &           2^{-(d_{H_2}(v_2, u_2)+1)}   - \sum_{x \in V(H_2)\setminus \{u_2\}} 2^{-(d_{H_2}(u_2, x)+1)} -2^{-1} - \sum_{x \in V(H_1)\setminus \{u_1\}} 2^{-(d_{H_1}(u_1,x)+1)}-2^{-1}\\
   & < &  2^{-d_{H_1}(v_1, u_1)} - 2^{-1}+\sum_{x \in V(H_1)\setminus \{u_1\}} 2^{-(d_{H_1}(u_1, x)+1 +  d_{H_1}(v_1, u_1))}-2^{-1}\\
   & \leq &  2^{-d_{H_1}(v_1, u_1)} -2^{-1}+|V(H_1)-1| [2^{-(d_{H_1}(v_1,u_1)+2)}]-2^{-1}.
  \end{eqnarray*}
Note that $d_{H_1}(v_1, u_1) \geq 1$ and $G$  is a cactus, so  $H_1$ must be either a cycle or a cut edge,  we have  $|V(H_1)-1| \leq 2d_{H_1}(v_1, u_1)$. Therefore, we get
$ C_{M}(v_2)- C_{M}(u_3)   <   2 \times 2^{-3}-2^{-1}<0.$
Hence, $C(G_2)< C(G)$.
\end{proof}

\begin{Lemma} \label{L5}
 Let $H_1, H_2$ be two connected graphs such that $H_2$ is not a path, $u \in V(H_1)$ and $|V(H_1)| \geq 2$. Let $v_1, v_2$ be the end-vertices of a longest path of $H_2$. Clearly, $|V(H_2)| \geq 3$. Let $G$ (resp. $G_1$) (resp. $G_2$) be the graphs obtained from  $H_1, H_2$ and $P_r = w_1w_2 \cdots w_r$ by identifying $u$ with $v_1$ and $v_2$ with $w_1$  (resp. $u$ with $w_r$ and $v_2$ with $w_1$) (resp. $u$ with $w_r$ and $v_1$ with $w_1$).

 $(i)$ If $C_{H_2}(v_2) \leq C_{H_2}(v_1)$, then $C(G_1) < C(G)$.

 $(ii)$ If $C_{H_2}(v_1) > C_{H_2}(v_2)$, then $C(G_2) < C(G)$.
 \end{Lemma}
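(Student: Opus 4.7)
The plan is to apply Lemma~\ref{111} to each of $C(G), C(G_1), C(G_2)$, splitting each graph along the identification that attaches $H_1$, and then to examine the two differences. Both (i) and (ii) will collapse to a single closeness lower bound for $v_1$ in $H_2$.

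Concretely, let $X$ be the graph obtained from $H_2$ and $P_r$ by identifying $v_2$ with $w_1$, and let $Y$ be the analogous graph obtained by identifying $v_1$ with $w_1$ instead. Then $G$, $G_1$, $G_2$ arise from $H_1$ by identifying $u$ with $v_1\in V(X)$, with $w_r\in V(X)$, and with $w_r\in V(Y)$, respectively. Lemma~\ref{111} immediately gives
\begin{align*}
C(G)-C(G_1) &= 2C_{H_1}(u)\bigl[C_X(v_1)-C_X(w_r)\bigr],\\
C(G)-C(G_2) &= \bigl[C(X)-C(Y)\bigr]+2C_{H_1}(u)\bigl[C_X(v_1)-C_Y(w_r)\bigr].
\end{align*}
Writing $d:=d_{H_2}(v_1,v_2)$ and $s:=1-2^{-(r-1)}$, a direct distance summation in $X$ and $Y$ yields
\[
C_X(v_1)=C_{H_2}(v_1)+2^{-d}s,\qquad C_X(w_r)=s+2^{-(r-1)}C_{H_2}(v_2),\qquad C_Y(w_r)=s+2^{-(r-1)}C_{H_2}(v_1),
\]
and a second application of Lemma~\ref{111} gives $C(X)-C(Y)=2s\bigl(C_{H_2}(v_2)-C_{H_2}(v_1)\bigr)$.

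Substituting these expressions and invoking the hypothesis in (i) (respectively the analogous hypothesis $C_{H_2}(v_1)\leq C_{H_2}(v_2)$ in (ii), the natural reading parallel to (i)), each difference becomes at least $2C_{H_1}(u)\,s\bigl(C_{H_2}(v_1)-(1-2^{-d})\bigr)$, augmented in case (ii) by the non-negative term $C(X)-C(Y)$. The lemma therefore reduces to proving the single strict inequality $C_{H_2}(v_1)>1-2^{-d}$.

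This is the main obstacle. I would prove it by fixing a shortest $v_1$-$v_2$ path $v_1=y_0,y_1,\ldots,y_d=v_2$: since every subpath of a shortest path is shortest, $d_{H_2}(v_1,y_i)=i$, so the vertices $y_1,\ldots,y_d$ contribute exactly $\sum_{i=1}^{d}2^{-i}=1-2^{-d}$ to $C_{H_2}(v_1)$, and it suffices to find one additional vertex in $V(H_2)\setminus\{y_0,\ldots,y_d\}$. If no such vertex existed, $y_0y_1\cdots y_d$ would be a Hamiltonian path of $H_2$, and since $H_2$ is not a path it would admit a chord $y_iy_j$ with $j\geq i+2$; but then the walk $v_1\to y_i\to y_j\to\cdots\to v_2$ would have length at most $i+1+(d-j)<d$, contradicting $d=d_{H_2}(v_1,v_2)$. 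Hence $|V(H_2)|\geq d+2$, any $z\in V(H_2)\setminus\{y_0,\ldots,y_d\}$ contributes $2^{-d_{H_2}(v_1,z)}>0$, and $C_{H_2}(v_1)>1-2^{-d}$. Feeding this strict bound back into the two reduced expressions completes both (i) and (ii).
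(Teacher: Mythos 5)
Your proposal follows essentially the same route as the paper: both split off $H_1$ via Lemma~\ref{111}, reduce everything to closenesses of the ``spine'' graphs (your $X$ and $Y$ are the paper's $A$ and $A_2$), and your closed-form expressions for $C_X(v_1)$, $C_X(w_r)$, $C_Y(w_r)$ and $C(X)-C(Y)$ reproduce the paper's computations exactly, so that both arguments terminate in the same quantity $\bigl[1-2^{-(r-1)}\bigr]\bigl[C_{H_2}(v_i)-(1-2^{-d_{H_2}(v_1,v_2)})\bigr]$. The one genuine divergence is the final positivity step: the paper asserts that the longest-path endpoint has two neighbours in $H_2$ (hence $C_{H_2}(v_i)>1$), a claim that is not justified in general (in a star or broom a longest path can end at a leaf, where the closeness equals $1$), whereas you prove exactly the weaker bound $C_{H_2}(v_1)>1-2^{-d}$ that is actually needed, via the shortest-path/chord argument. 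Your version is therefore not just equivalent but more robust --- it repairs the only shaky step in the paper's proof while changing nothing else. (Both proofs share the implicit assumption $r\geq 2$, without which the factor $1-2^{-(r-1)}$ vanishes; this is harmless in the intended application.)
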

\begin{proof}
Let $A= G - V(H_1) \setminus \{u\}$, $A_1= G_1 - V(H_1) \setminus \{w_r\}$ and $A_2= G_2 - V(H_1) \setminus \{w_r\}$. One has $A \cong A_1$. By Lemma \ref{111}, we have
\begin{equation}\label{e21}
C(G)= C(H_1) + C(A) + 2C_{H_1}(u)C_{A}(v_1),
\end{equation}
\begin{equation}\label{e22}
C(G_1)= C(H_1) + C(A) + 2C_{H_1}(u)C_{A}(w_r),
\end{equation}
\begin{equation}\label{e23}
C(G_2)= C(H_1) + C(A_2) + 2C_{H_1}(u)C_{A_2}(w_r).
\end{equation}

$(i)$ If $C_{H_2}(v_2) \leq C_{H_2}(v_1)$. Then by \eqref{e21} and \eqref{e22}, we get
\begin{equation}\label{e24}
   C(G_1) -   C(G)    =  2C_{H_1}(u) \left[C_{A}(w_r) - C_{A}(v_1) \right].
  \end{equation}
From \eqref{e24}, we have
\begin{eqnarray*}\label{}
& &C_{A}(w_r) - C_{A}(v_1) =  \sum_{x \in V(A)\setminus \{w_r\}} 2^{-d_A(x, w_r)}-\sum_{x \in V(A)\setminus \{v_1\}} 2^{-d_A(x, v_1)}\\
& = & \sum _{i= 1}^{r-1}2^{-d_{P_r}(w_i, w_r)} + \sum_{x \in V(H_2)\setminus \{v_2\}} 2^{-[ d_{H_2}(x, v_2) +  d_{P_r}(w_1, w_r)]}-\sum_{x \in V(H_2)\setminus \{v_1\}} 2^{-d_{H_2}(x, v_1)}\\
    &\qquad - & \sum _{i= 2}^{r} 2^{-[ d_{H_2}(v_1, v_2) +  d_{P_r}(w_1, w_i)]}\\
& = & \left[1-2^{-(r-1)}\right] + \sum_{x \in V(H_2)\setminus \{v_2\}}2^{- d_{H_2}(x, v_2) }-\sum_{x \in V(H_2)\setminus \{v_2\}}2^{- d_{H_2}(x, v_2) }\\
&\qquad + &  \sum_{x \in V(H_2)\setminus \{v_2\}} 2^{-[ d_{H_2}(x, v_2) +  r-1]}-\sum_{x \in V(H_2)\setminus \{v_1\}} 2^{-d_{H_2}(x, v_1)} - \left[1-2^{-(r-1)}\right]\\
&\qquad + & \left[1-2^{-(r-1)}\right] -  \left[1-2^{-(r-1)}\right]2^{- d_{H_2}(v_1, v_2)}\\
& = & C_{H_2}(v_2) - C_{H_2}(v_1)-  \left[1-2^{-(r-1)}\right]C_{H_2}(v_2) + \left[1-2^{-(r-1)}\right]\left[1-2^{- d_{H_2}(v_1, v_2)}\right]\\
& \leq &   \left[1-2^{-(r-1)}\right]\left[1-2^{- d_{H_2}(v_1, v_2)}-C_{H_2}(v_2)\right].
\end{eqnarray*}
Note that  $|V(H_2)| \geq 3$  and $H_2$ is not a path. Then there exist at least two vertices, say $x_1, x_2 \in V(H_2)$ such that  $d_{H_2}(x_1, v_2)=d_{H_2}(x_2, v_2)=1$. Therefore, $C_{H_2}(v_2) >1$, i.e., $\left[1-2^{- d_{H_2}(v_1, v_2)}-C_{H_2}(v_2)\right] < 0$. Hence, $ C(G_1) <  C(G) $.

$(ii)$ If $C_{H_2}(v_2) > C_{H_2}(v_1)$. Then by \eqref{e21} and \eqref{e23}, we get
\begin{equation}\label{e25}
   C(G_2) -   C(G)    = C(A_2)-C(A) +  2C_{H_1}(u) \left[C_{A_2}(w_r) - C_{A}(v_1) \right].
  \end{equation}
By Lemma \ref{111}, we have
\begin{equation*}\label{}
   C(A_2)     = C(H_2)+C(P_r) +  2C_{H_2}(v_1) C_{P_r}(w_1),
  \end{equation*}
  \begin{equation*}\label{}
   C(A)     = C(H_2)+C(P_r) +  2C_{H_2}(v_2) C_{P_r}(w_1).
  \end{equation*}
This gives
\begin{equation}\label{e26}
   C(A_2) - C(A)    =  2C_{P_r}(w_1)\left[C_{H_2}(v_1) - C_{H_2}(v_2) \right]<0.
  \end{equation}
 Also,  one has
  \begin{eqnarray}\label{e27}
& &C_{A_2}(w_r) - C_{A}(v_1) =  \sum_{x \in V(A_2)\setminus \{w_r\}} 2^{-d_{A_2}(x, w_r)}-\sum_{x \in V(A)\setminus \{v_1\}} 2^{-d_A(x, v_1)}\nonumber\\
& = & \sum _{i= 1}^{r-1}2^{-d_{P_r}(w_i, w_r)} + \sum_{x \in V(H_2)\setminus \{v_1\}} 2^{-[ d_{H_2}(x, v_1) +  d_{P_r}(w_1, w_r)]}-\sum_{x \in V(H_2)\setminus \{v_1\}} 2^{-d_{H_2}(x, v_1)}\nonumber\\
    &\qquad - & \sum _{i= 2}^{r} 2^{-[ d_{H_2}(v_1, v_2) +  d_{P_r}(w_1, w_i)]}\nonumber\\
& = & \left[1-2^{-(r-1)}\right] + \sum_{x \in V(H_2)\setminus \{v_1\}}2^{- d_{H_2}(x, v_1) }-\sum_{x \in V(H_2)\setminus \{v_1\}}2^{- d_{H_2}(x, v_1) }\nonumber\\
&\qquad + &  \sum_{x \in V(H_2)\setminus \{v_1\}} 2^{-[ d_{H_2}(x, v_1) +  r-1]}-\sum_{x \in V(H_2)\setminus \{v_1\}} 2^{-d_{H_2}(x, v_1)} - \left[1-2^{-(r-1)}\right]\nonumber\\
&\qquad + & \left[1-2^{-(r-1)}\right] -  \left[1-2^{-(r-1)}\right]2^{- d_{H_2}(v_1, v_2)}\nonumber\\
& = &   \left[1-2^{-(r-1)}\right]\left[1-2^{- d_{H_2}(v_1, v_2)}-C_{H_2}(v_1)\right].
\end{eqnarray}
Note that  $|V(H_2)| \geq 3$. Then there exist at least two vertices, say $s_1, s_2 \in V(H_2)$ such that  $d_{H_2}(s_1, v_1)=d_{H_2}(s_2, v_1)=1$. Therefore, $C_{H_2}(v_1) >1$, i.e., $\left[1-2^{- d_{H_2}(v_1, v_2)}-C_{H_2}(v_1)\right] < 0$. Thus, $ C_{A_2}(w_r) - C_{A}(v_1) < 0$.

Combining \eqref{e25}, \eqref{e26} and \eqref{e27}, we have,  $ C(G_2) <  C(G)$, as desired.
\end{proof}

\begin{Lemma} \label{L6}
 Let $H_1, H_2$ be two connected graphs with $v_1 \in V(H_1)$ and $v_2 \in V(H_2)$. Without loss of generality, we assume that   $|V(H_1)| \leq  |V(H_2)|$. Let $P_r = w_1w_2 \cdots w_r$ with $r \geq 4$. Put $M := P_r + w_{r-2}w_r$ and $M' := M - \{w_{r-2}w_r, w_{r-1}w_r\} + \{w_1w_r, w_2w_r\}$.   Let $G_1$ (resp. $G_2$)  be the graphs obtained from  $H_1, H_2$ and $M$ (resp. $M'$) by identifying  $v_1$ with $w_1$ and $v_2$ with $w_{r-1}$  (resp. $v_1$ with $w_1$ and $v_2$ with $w_{r-1}$). Then  $C(G_1) \leq C(G_2)$ with equality if and only if $|V(H_1)| =  |V(H_2)|$.
 \end{Lemma}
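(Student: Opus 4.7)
The plan is to exploit the fact that $M$ and $M'$ are isomorphic as abstract graphs --- each is a triangle attached at a single vertex to one end of a path --- so that $V(G_1) = V(G_2)$ and the two graphs differ only in which pair of edges is incident to the ``extra'' triangle vertex $w_r$. Concretely, in $G_1$ the vertex $w_r$ is adjacent to $v_2 = w_{r-1}$ and to $w_{r-2}$, while in $G_2$ it is adjacent to $v_1 = w_1$ and to $w_2$. A direct check shows that neither triangle provides a shortcut along the connecting path $w_1 w_2 \cdots w_{r-1}$, so for every pair $\{x,y\}$ with $x, y \neq w_r$ one has $d_{G_1}(x,y) = d_{G_2}(x,y)$; consequently
$$C(G_2) - C(G_1) \;=\; 2 \sum_{u \neq w_r} \bigl(2^{-d_{G_2}(w_r,u)} - 2^{-d_{G_1}(w_r,u)}\bigr).$$

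Next I would tabulate the distances from $w_r$ by splitting $V(G)\setminus\{w_r\}$ into $V(H_1)$, $V(H_2)$, and the inner path $\{w_2, \ldots, w_{r-2}\}$. In $G_1$ one has $d_{G_1}(w_r, v_2) = 1$ and $d_{G_1}(w_r, x) = 1 + d_{H_2}(v_2, x)$ on $V(H_2)\setminus\{v_2\}$, while $d_{G_1}(w_r, v_1) = r-2$ and $d_{G_1}(w_r, y) = (r-2) + d_{H_1}(v_1, y)$ on $V(H_1)\setminus\{v_1\}$; on the inner path $d_{G_1}(w_r, w_i) = r - 1 - i$. In $G_2$ the roles of $H_1$ and $H_2$ are exchanged and $d_{G_2}(w_r, w_i) = i - 1$. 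Applying the identity $\sum_{x \neq v_i} 2^{-(c + d_{H_i}(v_i,x))} = 2^{-c} C_{H_i}(v_i)$ factors the $H_1$- and $H_2$-contributions through $C_{H_1}(v_1)$ and $C_{H_2}(v_2)$, while the substitution $j = r-1-i$ shows that the inner-path sum $\sum_{i=2}^{r-2}\bigl(2^{-(i-1)} - 2^{-(r-1-i)}\bigr)$ collapses to zero. Assembling the terms yields
$$C(G_2) - C(G_1) \;=\; 2\bigl(2^{-1} - 2^{-(r-2)}\bigr)\bigl[C_{H_1}(v_1) - C_{H_2}(v_2)\bigr],$$
with the prefactor strictly positive for $r \geq 4$.

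The remaining step is to translate the size hypothesis $|V(H_1)| \leq |V(H_2)|$ into the required comparison of $C_{H_1}(v_1)$ and $C_{H_2}(v_2)$, with the equality case pinned down by $|V(H_1)| = |V(H_2)|$. Following the pattern used at the end of Lemma~\ref{L2}, I expect to invoke the elementary majorization $2 C_H(v) \leq |V(H)| - 1$ --- tight exactly when every vertex of $H$ is adjacent to $v$ --- to dominate the bracket $[C_{H_1}(v_1) - C_{H_2}(v_2)]$ by a multiple of $|V(H_1)| - |V(H_2)|$ of the correct sign, with equality iff the sizes coincide. This closure step is the main obstacle: the distance-sum bookkeeping above is mechanical, but aligning the closeness difference with the vertex-count difference requires the correct tight bound together with the right direction of monotonicity, mirroring the analogous step in Lemma~\ref{L2}.
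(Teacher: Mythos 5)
Your computation of the key identity is correct and, up to routing, equivalent to the paper's: you compare distances from $w_r$ directly after observing that $G_1$ and $G_2$ agree off $w_r$, whereas the paper applies Lemma \ref{111} twice (first splitting off $H_2$ at $v_2$, then $H_1$ at $v_1$) and uses $M\cong M'$; both routes yield $C(G_2)-C(G_1)=2\bigl(2^{-1}-2^{-(r-2)}\bigr)\bigl[C_{H_1}(v_1)-C_{H_2}(v_2)\bigr]$. Your route is arguably cleaner, since it avoids the intermediate graphs $Z_1,Z_2$, and your verification that $w_r$ creates no shortcuts (so all distances not involving $w_r$ agree) is the right justification.

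The genuine gap is the closure step, and although you correctly flag it as the obstacle, the tool you propose cannot close it. The one-sided bound $2C_H(v)\le |V(H)|-1$ gives only upper bounds on each closeness, and upper bounds on both $C_{H_1}(v_1)$ and $C_{H_2}(v_2)$ cannot control the sign of their difference. In fact $|V(H_1)|\le|V(H_2)|$ does not determine that sign at all: take $H_1$ a star on $5$ vertices centred at $v_1$, so $C_{H_1}(v_1)=2$, and $H_2$ a path on $6$ vertices with $v_2$ an endpoint, so $C_{H_2}(v_2)=1-2^{-5}<1$; reversing the roles flips the inequality. So the lemma for arbitrary connected $H_1,H_2$ cannot follow from the size hypothesis alone, and some structural restriction on the $H_i$ (as in the application to Theorem \ref{a}, where they are chains of triangles) is required. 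You should be aware that the paper itself makes exactly this unjustified leap, asserting $2[C_{H_1}(v_1)-C_{H_2}(v_2)][2^{-1}-2^{-(r-2)}]\le(|V(H_1)|-|V(H_2)|)[2^{-2}-2^{-(r-1)}]$ with no argument (the same star/path pair violates it), and that its final chain of inequalities proves $C(G_2)-C(G_1)\le 0$, which is the opposite of the stated conclusion $C(G_1)\le C(G_2)$. In short: you reproduced the paper's computation faithfully and stalled precisely where the paper's own argument is unsound; to finish one must either hypothesize a direct comparison between $C_{H_1}(v_1)$ and $C_{H_2}(v_2)$ or prove it for the specific graphs $H_1,H_2$ that actually occur in the proof of Theorem \ref{a}.
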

\begin{proof}
Let  $Z_1= G_1 - V(H_2) \setminus \{v_2\}$ and $Z_2= G_2 - V(H_2) \setminus \{v_2\}$. By Lemma \ref{111}, we have
\begin{equation*}\label{}
C(G_1)= C(H_2) + C(Z_1) + 2C_{H_2}(v_2)C_{Z_1}(v_2),
\end{equation*}
\begin{equation*}\label{}
C(G_2)= C(H_2) + C(Z_2) + 2C_{H_2}(v_2)C_{Z_2}(v_2).
\end{equation*}
From the above, we get
\begin{equation}\label{e28}
   C(G_2) -   C(G_1)    = C(Z_2)-C(Z_1)+ 2C_{H_2}(v_2) \left[C_{Z_2}(v_2) - C_{Z_1}(v_2) \right].
  \end{equation}
  By Lemma \ref{111}, we have
\begin{equation*}\label{}
   C(Z_2)     = C(H_1)+C(M') +  2C_{H_1}(v_1) C_{M'}(w_1),
  \end{equation*}
  \begin{equation*}\label{}
   C(Z_1)     = C(H_1)+C(M) +  2C_{H_1}(v_1) C_{M}(w_1).
  \end{equation*}
Note that $M \cong M'$, we have
 \begin{eqnarray}\label{e29}
   C(Z_2) - C(Z_1)   & = & 2C_{H_1}(v_1)\left[ C_{M'}(w_1) - C_{M}(w_1) \right] \nonumber\\
                      & =& 2C_{H_1}(v_1)\left[ \sum _{i= 2}^{r-1}2^{-(i-1)}+2^{-1} - \sum _{i= 2}^{r-1}2^{-(i-1)}-2^{-(r-2)} \right] \nonumber\\
                      & =& 2C_{H_1}(v_1)\left[ 2^{-1} -2^{-(r-2)} \right]
  \end{eqnarray}
   Moreover,  one has
  \begin{eqnarray}\label{e30}
C_{Z_2}(v_2) - C_{Z_1}(v_2) & = & \sum_{x \in V(Z_2)\setminus \{v_2\}} 2^{-d_{Z_2}(x, v_2)}-\sum_{x \in V(Z_1)\setminus \{v_2\}} 2^{-d_Z(x, v_2)}\nonumber\\
& = & \sum _{i= 2}^{r-1}2^{-(r-1-i)}+ 2^{-(r-2)} + \sum_{x \in V(H_1)\setminus \{v_1\}} 2^{-[d_{H_1}(x, v_1)+r-2]}\nonumber\\
&\qquad - & \sum _{i= 2}^{r-1}2^{-(r-1-i)}- 2^{-1} - \sum_{x \in V(H_1)\setminus \{v_1\}} 2^{-[d_{H_1}(x, v_1)+r-2]}\nonumber\\
 & =& \left[ 2^{-(r-2)}- 2^{-1} \right].
\end{eqnarray}
 Combining \eqref{e28}, \eqref{e29} and \eqref{e30}, we get
 \begin{eqnarray*}\label{}
   C(G_2) -   C(G_1)    &=& 2C_{H_1}(v_1)\left[ 2^{-1} -2^{-(r-2)} \right]+ 2C_{H_2}(v_2) \left[2^{-(r-2)}- 2^{-1} \right]\\
   &=& 2\left[C_{H_1}(v_1) -  C_{H_2}(v_2) \right] \left[2^{-1}-2^{-(r-2)} \right]\\
    &\leq& \left(|V(H_1)| -  |V(H_2)| \right) \left[2^{-2}-2^{-(r-1)} \right] \leq 0.
  \end{eqnarray*}
with  equality if and only if $|V(H_1)| =  |V(H_2)|$.
\end{proof}

\section{Proof of Theorem \ref{a}}
Now we are ready to give a proof to Theorem \ref{a}

\begin{proof}
Let $G \in \mathcal{D}(n, k)$ such that $C(G)$ is as small as possible.

If $k =0$, then $G$ is a tree. If $G$ has more than two pendent vertices, then  applying Lemma \ref{L3} repeatedly to $G$, we get $G \cong D(n; 0, 0)=P_n$.

In what follows, we suppose that $k\geq 1$.
Let $r$ be the length of the cycle contained in $G$.  We claim that $r =3$.
Suppose that $r \geq 4$. If $r \geq 5$, then by Lemma \ref{L1}, we obtain another graph $G' \in \mathcal{D}(n, k)$ such that $C(G')<C(G)$, a contradiction. If $r = 4$, i.e., $G$ contains a cycle $C_4=u_1u_2u_3u_4u_1$ such that $G-E(C_4)$ has exactly four components. Let $H_i$ be the component of $G-E(C_4)$ containing $u_i$, where $1 \leq i \leq 4$ and $|V(H_1)|= max_{1 \leq i \leq 4}\{|V(H_i)|\}$. If $|V(H_1)| \neq |V(H_2)|$, then according to Lemma \ref{L2}, we get $G' \in \mathcal{D}(n, k)$ such that $C(G')<C(G)$, a contradiction. If $|V(H_1)| = |V(H_2)|$, then by Lemma \ref{L2}, we have $C(G')=C(G)$. By Lemma \ref{L3}, there exists $G'' \in \mathcal{D}(n, k)$ such that $C(G'')<C(G')$, which follows $C(G'')<C(G)$, a contradiction to the choice of $G$. Thus $r =3$, as desired.

By Lemma \ref{L4}, there exist at most two vertices with degrees greater than 2 in each triangle of $G$. And by Lemma \ref{L3},  no three blocks are having a common vertex in $G$. Thus, $G$ is a cactus obtained from $P_{n-k}=u_1u_2 \cdots u_{n-k}$ by adding $k$ vertices $\{v_1, v_2, \dots, v_k\}$ and $2k$ edges such that $N_G(v_x)= \{u_i, u_{i+1}\}$ for some $1 \leq i \leq n-k-1$ and $|N_G(v_x)\cap N_G(v_y)| \leq 1$ for any $1 \leq x < y \leq k$.

If $k=1$, then we claim that $G$ has exactly one pendent path. Otherwise, by Lemma \ref{L5}, we obtain another graph $G' \in \mathcal{D}(n, k)$
 such that $C(G')<C(G)$, a contradiction. Thus,  $G$ has exactly one pendent path.

If $k \geq 2$. Then by Lemma \ref{L5}, $G$ has no pendent path. We claim that  $G$ has at most one internal path. Suppose that $G$ has two internal paths, say $P, Q$. Since $G$ has no pendent path, and each block of $G$ is either a cycle or a cut edge. Therefore, each of the end-block of $G$ is a triangle. We may partition  $G$ as $G_1\cup G_0 \cup G_2$, where $G_1$ is composed of the leftmost successive triangles of length, say $\ell_1$, $G_0$ is the path $P$ together with triangle(s) (that lie between the two paths $P$ and $Q$), and $G_2$ is composed of the path $Q$ together with the rightmost successive triangles of length, say $\ell_2$. Without loss of generality, we assume that  $\ell_2 \geq \ell_1 \geq 1 $. Clearly, $|V(G_1)|= 2 \ell_1 +1 < 2 \ell_2 +2 \leq  |V(G_2)|$. By Lemma \ref{L6}, there exists another graph $G' \in \mathcal{D}(n, k)$ such that $C(G')<C(G)$, a contradiction to the choice of $G$. Thus, $G$ has at most one internal path, i.e., $G \cong D(n; k_1, k_2)$, where $k_1 + k_2 = k$.

We claim that $|k_1 - k_2| \leq 1$. Otherwise, by Lemma \ref{L6}, there exists another graph $G' \in \mathcal{D}(n, k)$ such that $C(G')<C(G)$, a contradiction. Thus, $|k_1 - k_2| \leq 1$, i.e., $G \cong D(n; \lfloor \frac{k}{2} \rfloor, \lceil\frac{k}{2}\rceil)$.
\end{proof}

\section{Concluding remarks}
Poklukar and \v{Z}erovnik \cite{PZ} determined the unique graph that maximizes the closeness among all cacti with a fixed number of vertices and cycles and posed an open problem regarding the minimum case. In  this article, we solved  the open  problem proposed by Poklukar and \v{Z}erovnik. We obtained the graph that uniquely minimizes the closeness among all cacti with a fixed number of vertices and cycles, which completed the characterization of the graphs with extremal (maximum and minimum) closeness over all cacti with a fixed number of vertices and cycles. For future work, it will be  interesting to determine the graphs with the second maximum and minimum closeness in the class of $n$-vertex cacti with a fixed number of cycles.

\vspace{3mm}
\noindent {\bf Acknowledgement:} This work was supported  by National Natural Science Foundation of China (Grant Nos. 12071194, 11571155).

\vspace{3mm}

\end{document}